\theoremstyle{remark}
\newtheorem{thm}{Theorem}
\newtheorem{defn}[thm]{Definition}
\newtheorem{assu}[thm]{Assumption}
\newtheorem{algo}[thm]{Algorithm}
\newtheorem{rmk}[thm]{Remark}
\newcommand{\overlength}[1]{#1}  				
\renewcommand{\overlength}[1]{}  				
\newcommand{\overlengthSeven}[1]{#1} 			
\newcommand{\Six}[1]{#1}				        
\newcommand{\DeleteSix}[1]{#1}					
\renewcommand{\DeleteSix}[1]{}  				
\title{\LARGE \bf
Iterative data-driven inference of nonlinearity measures\\ via successive graph approximation
}
\author{Tim Martin and Frank Allg{\"o}wer*
	\thanks{*T. Martin and F. Allg{\"o}wer are with the Institute for Systems Theory and Automatic Control, University of Stuttgart. This work was funded by Deutsche Forschungsgemeinschaft (DFG, German Research Foundation) under Germany's Excellence Strategy - EXC 2075 - 390740016. For correspondence, mailto: tim.martin@ist.uni-stuttgart.de.}
}
\begin{document}

\IEEEoverridecommandlockouts

\IEEEpubid{\begin{minipage}{\textwidth}\ \\[12pt] \copyright 2020 IEEE. This version has been accepted for publication in Proc. Conference on Decision and Control, 2020. Personal use of this material is permitted. Permissionfrom EUCA must be obtained for all other uses, in any current or future media, including reprinting/republishing this material for advertising or promotionalpurposes, creating new collective works, for resale or redistribution to servers or lists, or reuse of any copyrighted component of this work in other works.\end{minipage}}

\maketitle
\pagestyle{empty}

\begin{abstract}

In this paper, we establish an iterative data-driven approach to derive guaranteed bounds on nonlinearity measures of unknown nonlinear systems. In this context, nonlinearity measures quantify the strength of the nonlinearity of a dynamical system by the distance of its input-output behaviour to a set of linear models. First, we compute a guaranteed upper bound of these measures by given input-output samples based on a data-based non-parametric set-membership representation of the ground-truth system and local inferences of nonlinearity measures. Second, we propose an algorithm to improve this bound iteratively by further samples of the unknown input-output behaviour. 

\end{abstract}

\section{Introduction}\label{Intrduction}

Deriving a controller for complex systems requires usually a sufficiently precise model. However, modelling such systems is difficult and more time consuming than the controller design. For this purpose, data-driven controller design, where a controller is obtained without identifying a model, has been investigated. \cite{c12} gives an overview of such approaches.\\\indent 
One data-driven approach is examined in \cite{c6} where control-theoretic system properties, as $\mathcal{L}_2$-gain and conic relations, are learned from given input-output samples. These properties give insight to the open-loop system and facilitate the application of well-known feedback theorems. Analogously, \cite{c100} deduces from given input-output samples a linear surrogate model that minimizes the maximal deviation to the unidentified nonlinear system. By the knowledge of the linear model and its approximation error, techniques from robust control theory can be applied to determine a controller with closed-loop guarantees. Furthermore, the approximation error corresponds to the nonlinearity measures from \cite{c3} and \cite{c2} which quantify the nonlinearity of dynamical systems.\\\indent
The drawback of the approach from \cite{c6} is the requirement of a large number of input-output samples. To this end, iterative approaches are investigated where the control-theoretic properties is identified by performing sequentially experiments on the plant. These algorithms provide an (optimal) decision what experiment should be applied next to improve the estimation of the system property. For example, the $\mathcal{L}_2$-gain and a linear surrogate model are computed in \cite{c8} and \cite{c101}, respectively, for linear time-invariant (LTI) systems based on solving optimization problems using gradient-based methods. \\\indent  
To deduce an iterative scheme for nonlinear systems, we exploit in this work a non-parametric data-based model.\overlength{ The most famous non-parametric modelling technique is Gaussian process regression \cite{c112}. There, the set of possible functions is described by means of a Gaussian distribution, where its mean function and covariance kernel is updated with additional data to achieve a more precise estimation. Gaussian process regression is successfully applied in system identification, control, passivity verification \cite{c109}, and optimization with unknown objective function \cite{c111}.} Instead of a statistical approach\overlengthSeven{ as in  Gaussian process regression \cite{c112}}, we study a non-parametric set-membership model where an envelope of the graph of a Lipschitz function is described directly from given input-output data. Such Lipschitz approximations are investigated, e.g., in set-membership identification \cite{c108} and in Kinky inferences for nonlinear model predictive control \cite{c102,c113}. \\\indent
In this paper, we determine a guaranteed upper bound on nonlinearity measures by means of an envelope that contains the unknown input-output behaviour of the ground-truth system. Especially, the bound is obtained from the maximal distance of a linear approximation model to all realizations of mappings which are contained in the envelope. A bound on this distance is calculated by local inferences of the nonlinearity measure. Thus, this approach constitutes an alternative to \cite{c6} and \cite{c100} for deriving guaranteed bounds on system-theoretic properties from given input-output samples. Moreover, contrary to \cite{c6} and \cite{c100}, the computation of the covering radius is not required and noisy output measurements can be considered. Furthermore, we extend this approach to an iterative scheme based on a branch-and-bound algorithm \cite{BnB} to reduced the derived upper bound on the nonlinearity measure by further sampling. Here, we ensure that the computational complexity of this algorithm does not increase with further iterations and prove the convergence to the true nonlinearity measure in absence of noise. \\\indent 
In contrast to set-membership identification, we approximate the nonlinear system by a linear model that is in general not contained in the envelope. Instead, the linear model is a projection of the envelope on a set of linear models, and therefore the results of \cite{c108} are not applicable. Moreover, our goal differs from system identification in the sense that we approximate the behaviour of a complex (nonlinear) system by a simple (linear) model, i.e., the linear model is a surrogate model of the nonlinear system.\\\indent 
The paper is organized as follows. First, we introduce nonlinearity measures and specify the problem of estimating nonlinearity measures via graph approximation. Then we solve this problem by means of local inference of nonlinearity measures. Subsequent, we propose the iterative scheme to improve the estimation of the nonlinearity measure by successive sampling. The paper concludes with \DeleteSix{some extensions of the iterative scheme, e.g., to increase its convergence rate, and }a numerical example where the iterative scheme is compared to the offline approach \cite{c100}.

 \IEEEpubidadjcol

\section{Problem setup and Definition of a nonlinearity measure}\label{PreSet}

Let the input-output behaviour of the unknown discrete-time nonlinear SISO system be described by the mapping
\begin{equation*}
	N:\mathcal{U}\subset\mathbb{R}^n\rightarrow\mathcal{Y}\subset\mathbb{R}^n,
\end{equation*}
i.e. $N$ maps input on output trajectories of length $n$. We suppose that $N(0)=0$ without loss of generality.\\\indent
Moreover, let the input set $\mathcal{U}$ be spanned by an orthonormal basis of signals $v_i\in\mathbb{R}^n,i=1,\dots,\mu\leq n$
\begin{equation}\label{InputSet}
	\mathcal{U} = \{u\in\mathbb{R}^n: u=\begin{bmatrix}v_1 & \cdots & v_\mu\end{bmatrix}\bar{u}, \bar{u}\in\bar{\mathcal{U}}\subset\mathbb{R}^\mu\}
\end{equation}
where the amplitudes $\bar{u}$ are bounded by the box constraint $\bar{u}\in\bar{\mathcal{U}}=[\underline{\alpha}_1,\bar{\alpha}_1]\times\dots\times[\underline{\alpha}_\mu,\bar{\alpha}_\mu]$. This compact input set is also assumed in \cite{c6} and \cite{c100} since it is often considered in system identification where the basis $v_1,\dots,v_\mu$ is chosen, e.g., to a Fourier basis or Legendre polynomials. Moreover, a suitable choice of basis signals ensures that all inputs which are suggested by our iterative scheme are experimentally admissible. Note that each input $u\in\mathcal{U}$ corresponds to an unique amplitude $\bar{u}\in\bar{\mathcal{U}}$ because $v_1\dots,v_\mu$ is a orthogonal basis and $\mu\leq n$. Therefore, we can exchange $u$ by its corresponding amplitude $\bar{u}$ and vice versa.\\\indent
Furthermore, we suppose that $\mathcal{Y}$ is a compact set to ensure the well-definiteness of the following definition of nonlinearity measures from \cite{c3}.
\begin{defn}[AE-NLM]\label{NLMDef}
The nonlinearity of a dynamical system $N:\mathcal{U}\subset\mathbb{R}^n\rightarrow\mathcal{Y}\subset\mathbb{R}^n$ is quantified by the additive error nonlinearity measure (AE-NLM)
\begin{align}\label{AENLM}
	\Phi^{\mathcal{U},\mathcal{G}}_{\text{AE}}&:=\inf_{G\in\mathcal{G}}\sup_{u\in\mathcal{U}\backslash \{0\}}\frac{||N(u)-G(u)||}{||u||}
\end{align}
where $||\cdot||$ denotes the Euclidean vector norm and $G:\mathcal{U}\rightarrow\mathcal{Y}$ is an element of a set $\mathcal{G}$ of LTI systems.
\end{defn}
Solving the optimization problem~\eqref{AENLM} yields the\ \textquoteleft best\textquoteright\ linear approximation $G^*$ that minimizes the gain of the error system $\Delta:=N-G^*$ with respect to the Euclidean norm. Thus, the nonlinear system can be written as the interconnection of the linear model $G^*$ and the error model $\Delta$ which gain corresponds to the AE-NLM. Therefore, techniques from robust control theory can be applied once $G^*$ and the nonlinearity measure are known. Furthermore, $G^*$ can be seen as the projection of the nonlinear system $N$ on the set of linear systems $\mathcal{G}$ as illustrated in Figure~\ref{Fig:NLMProject}.
\begin{figure}
	\begin{center}
		\begin{tikzpicture}[scale=0.6]
\draw  plot[smooth, tension=.7] coordinates {(-3.5,0.5) (-3,2.5) (-1,3) (1.5,2.3) (3.5,2.5) (3.75,1.5) (4,0) (2.5,0.5) (0,-0.25) (-3,-0.5) (-3.5,0.5)};

\path (-2.5,0.5) node (G_set) {$\mathcal{G}$}; 
\path (0,0.5) node[left,circle, scale=0.3, fill, draw, black] (Zero) {}; 
\path (Zero) node[left] (Zero1) {$0$}; 
\path (1.8,1.0) node[below, circle, scale=0.3, fill, draw, black] (G_star) {}; 
\path (G_star) node[right] (G_star1) {$G^*$};
\path (1.8,3) node[above, circle, scale=0.3, fill, draw, black] (N) {};
\path (N) node[above] (N1) {$N$}; 
\path (0.1,2.1) node (Norm_N) {$||N||$}; 
\path (2.5,1.9) node (Phi) {$\Phi^{\mathcal{U},\mathcal{G}}_{\text{AE}}$};

\draw[-,line width=1pt] (Zero) -- (G_star);
\draw[-,line width=1pt] (G_star) -- (N);
\draw[-,line width=1pt] (N) -- (Zero);

\end{tikzpicture}
	\end{center}
	\caption{Projection of a nonlinear system $N$ on the set of linear models $\mathcal{G}$.}
	\label{Fig:NLMProject}
\end{figure}
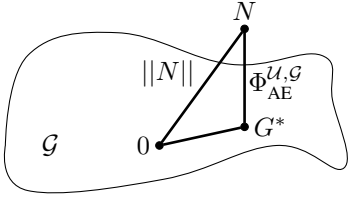
As shown in \cite{c100}, the AE-NLM is related to the $\ell_2$-gain and the conic relations from \cite{c5} by the special choice $\mathcal{G}=\{0\}$ and $\mathcal{G}=\{G=cI:c\in\mathbb{R}\}$ with $I:u\mapsto u$, respectively. For further reading on nonlinearity measures, we refer to \cite{c100} where, amongst other things, parametrizations of $\mathcal{G}$ are proposed and a characterization of stability for feedback interconnections using nonlinearity measures is derived via the concept of graph separation. 

\section{A data-based non-parametric model for Lipschitz mappings}

In this section, we introduce the data-based non-parametric model for Lipschitz mappings from \cite{c108}. Subsequent, we specify the problem setup to calculate an upper bound for the AE-NLM from this model.\\\indent 
To conclude on the input-output behaviour of the unknown nonlinear system $N$, we assume the access of input-output trajectories
\begin{equation*}
	\mathcal{U}_D\times\mathcal{Y}_D:=\{(u_1,y_1),\dots,(u_D,y_D)\}\subset\mathcal{U}\times\mathcal{Y}
\end{equation*} 
of the nonlinear system, i.e., $y_i=N(u_i),i=1,\dots,D$. Moreover, let $\bar{\mathcal{U}}_D$ denote the set of amplitudes $\bar{u}_1,\dots,\bar{u}_D$ that correspond to $u_1,\dots,u_D$. Since the set of mappings generating $\mathcal{U}_D\times\mathcal{Y}_D$ is unbounded, the rate of variation of $N$ is restricted as in \cite{c108}.
\begin{assu}[Lipschitz-continuity]\label{AssuLipschitz}
	The input-output mapping $N:\mathcal{U}\rightarrow\mathcal{Y}$ is Lipschitz continuous with $L>0$, i.e.,
\begin{equation*}\label{LipschitzIneq}
	||N(u)-N(u')||\leq L||u-u'||,\quad \forall u,u'\in\mathcal{U}
\end{equation*}
and the Lipschitz constant $L$ is known.
\end{assu}
In general, the Lipschitz constant $L$ is not known beforehand. However, different data-driven methods were developed to estimate $L$, e.g., Strongin’s estimator \cite{c104} and POKI \cite{c105}. Under the prior knowledge of $\mathcal{U}_D\times\mathcal{Y}_D$ and Assumption~\ref{AssuLipschitz}, we can conclude that the graph of the mapping $N$ is contained in the envelope
\begin{align*}
	E(\mathcal{U}_D\times\mathcal{Y}_D):=& \{(u,y)\in\mathcal{U}\times\mathcal{Y}: ||y-y_i||\leq L||u-u_i||,\\
	 &\hspace{4.0cm}i=1,\dots,D\} .
\end{align*}
An illustration of this envelope can be found in Figure~\ref{Fig:envelope} for $n=1$ and in \cite{c108} for higher dimensions. 
\begin{figure}
	\begin{center}
		\includegraphics[width=0.9\linewidth]{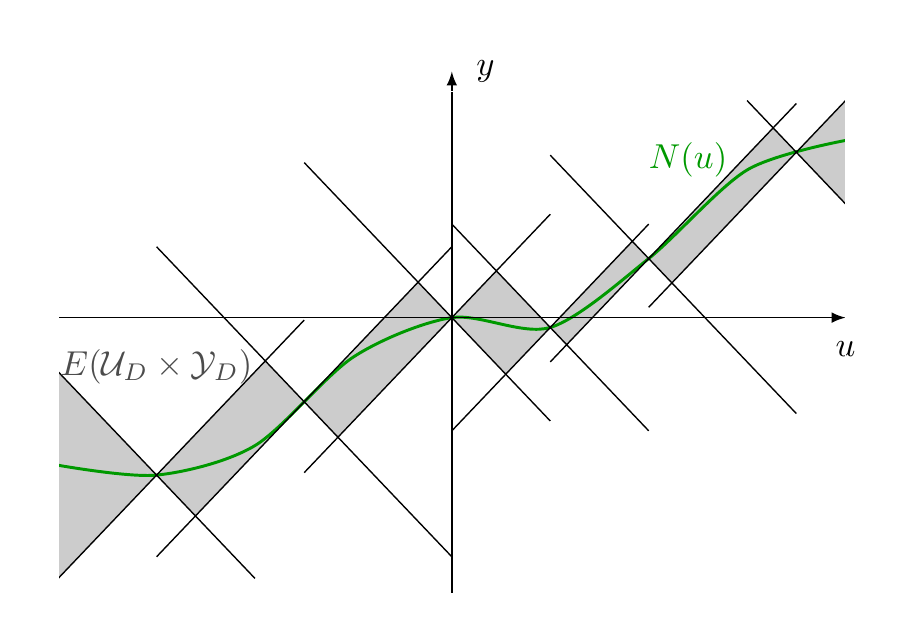}
	\end{center}
	\caption{Envelope of an one dimensional input-output mapping $y=N(u)$.}
	\label{Fig:envelope}
\end{figure}
Since $E(\mathcal{U}_D\times\mathcal{Y}_D)$ is defined through the input-output samples $\mathcal{U}_D\times\mathcal{Y}_D$, the envelope establishes a data-based non-parametric set-membership representation of the ground-truth mapping $N$.\\\indent
In the remainder of this paper, we exploit the envelope $E(\mathcal{U}_D\times\mathcal{Y}_D)$ to determine an upper bound on the AE-NLM. Since the graph of $N$ is a subset of $E(\mathcal{U}_D\times\mathcal{Y}_D)$,
\begin{equation}\label{Upperbound}
\begin{aligned}
	&\inf_{G\in\mathcal{G}}\max_{u\in\mathcal{U},||u||\geq\epsilon}\frac{||N(u)-G(u)||}{||u||}\\
	\leq&\inf_{G\in\mathcal{G}}\max_{\substack{(u,y)\in E(\mathcal{U}_D\times\mathcal{Y}_D),\\||u||\geq\epsilon}}\frac{||y-G(u)||}{||u||}{=:}\Phi^{E(\mathcal{U}_D\times\mathcal{Y}_D),\mathcal{G}}_{\text{AE}}.
\end{aligned}
\end{equation}
First, observe that we exclude small inputs $||u||<\epsilon$ in \eqref{Upperbound} similar to \cite{c6} and \cite{c100} as otherwise the upper bound $\Phi^{E(\mathcal{U}_D\times\mathcal{Y}_D),\mathcal{G}}_{\text{AE}}$ would be at least $L$ regardless of the data set $\mathcal{U}_D\times\mathcal{Y}_D$. Indeed, there exists a neighbourhood of $(u,y)=(0,0)$ with $E(\{(0,0)\})=E(\mathcal{U}_D\times\mathcal{Y}_D)$ and
\begin{equation*}
	\inf_{G\in\mathcal{G}}\sup_{\substack{(u,y)\in E(\{(0,0)\}),\\ u\neq0}}\frac{||y-G(u)||}{||u||}{=} \sup_{\substack{(u,y)\in E(\{(0,0)\}),\\ u\neq0}}\frac{||y||}{||u||}{=}L.
\end{equation*}
Here, the first equality holds due to the optimal approximation of Lipschitz functions from \cite{c108}.\overlength{ Instead of excluding small inputs in \eqref{Upperbound}, we could extend the definition of nonlinearity measures by a fixed constant $\epsilon>0$
\begin{align*}
	\Phi^{\mathcal{U},\mathcal{G}}_{\text{AE}}=&\inf_{G\in\mathcal{G}}\sup_{u\in\mathcal{U}\backslash\{0\}}\frac{||N(u)-G(u)||}{||u||}+\epsilon.
\end{align*}
Such definition is common for quadratic constraints for nonlinear systems to consider initial conditions.}\\\indent
Second, note that there always exist inputs that solves the left- and right-hand side of \eqref{Upperbound} as the input set is compact by assumption. However, the solution is not necessarily unique.
\overlength{
\begin{rmk}\label{Sprocedure}
To solve the optimization problem of $\Phi^{E(\mathcal{U}_D\times\mathcal{Y}_D),\mathcal{G}}_{\text{AE}}$ in \eqref{Upperbound}, we first follow \cite{c103}. Thus, the application of the S-procedure yields for the right-hand side of \eqref{Upperbound}
\begin{equation}\label{Opt3}
\begin{aligned}
	&\inf_{G\in\mathcal{G},t\geq0,\lambda_1,\dots,\lambda_D\geq0} t\\
	&\text{s.t.}\ \begin{bmatrix}\Psi^TM^TM\Psi-\begin{bmatrix}tI & 0\\ 0 & 0\end{bmatrix} & 0\\ 0 & 0\end{bmatrix}{-}\sum_{i=1}^{D}\lambda_iQ_L^N(u_i){\leq}0
\end{aligned}
\end{equation}
with 
\begin{align*}
Q_L^N(u_i)&{=}\begin{bmatrix}-L^2I & 0 & L^2u_i\\ 0 & I & -y_i\\ L^2{u_i}^T & -y_i^T & y_i^Ty_i-L^2{u_i}^Tu_i\end{bmatrix}, i{=}1,\dots,D,\\
\Psi&{=}\begin{bmatrix}
G & 0 \\ 0 & I \end{bmatrix}\ \text{and}\ M=\frac{1}{\sqrt{2}}\begin{bmatrix}
I & -I \\ -I & I \end{bmatrix}.
\end{align*}
For the sake of simplicity, we omit in \eqref{Opt3} the bound of the input set $\mathcal{U}$ and the exclusion of small inputs by, e.g., ellipsoids. In \eqref{Opt3}, taking the infimum of $t$ leads to the optimal linear model $G^*=0$ as the linear model $G$ only appears in the positive semi-definite matrix $\Psi^TM^TM\Psi$. The problem is that the S-procedure is only a sufficient condition. Moreover, the relaxation of the S-procedure provides a conservative approximation of the nonlinearity measure, even if the optimization over a set of linear models is omitted, and other quadratic system properties, e.g., passivity. Hence,the S-procedure exhibits a data-inefficient estimation for quadratic system properties.  
\end{rmk}}
\overlengthSeven{
\begin{rmk}\label{Sprocedure}
To solve the optimization problem of $\Phi^{E(\mathcal{U}_D\times\mathcal{Y}_D),\mathcal{G}}_{\text{AE}}$ in \eqref{Upperbound}, we could pursue \cite{c103} by applying the S-procedure to derive a semi-definite programming for the right-hand side of \eqref{Upperbound}. However, due to the relaxation of the S-procedure, the problem emerges that the optimal linear model is zero regardless of the data set and the relaxation exhibits a data-inefficient estimation for \eqref{Upperbound} as well as for other quadratic system properties as, e.g., passivity. 
\end{rmk}}

\section{Data-driven inference of the nonlinearity measure}\label{IterativeApproach}

In this section, we present our two contributions. First, we establish an approach to solve the optimization problem of $\Phi^{E(\mathcal{U}_D\times\mathcal{Y}_D),\mathcal{G}}_{\text{AE}}$ for a given linear approximation model and given input-output samples. Subsequent, we extend this approach to an iterative procedure based on a branch-and-bound algorithm \cite{BnB} to improve the upper bound of the AE-NLM by iterative sampling.\DeleteSix{ In Section~\ref{ExtenScheme}, we present some modifications of the iterative scheme. }

\subsection{Inference of the AE-NLM by local inferences}\label{Relaxation}

In this section, we solve for a given linear surrogate model $G$ the optimization problem 
\begin{align}\label{UpperboundFixG}
	\Phi^{E(\mathcal{U}_D\times\mathcal{Y}_D),G}_{\text{AE}}=\max_{\substack{(u,y)\in E(\mathcal{U}_D\times\mathcal{Y}_D),\\||u||\geq\epsilon}}\frac{||y-G(u)||}{||u||},
\end{align}
which corresponds to the right-hand side of \eqref{Upperbound} with $\mathcal{G}=\{G\}$. Similar to \cite{c113}, we obtain the global inference \eqref{UpperboundFixG} by means of local inferences of the AE-NLM. To properly calculate the local AE-NLM inferences, we also
suggest \DeleteSix{two nonconvex relaxations}\Six{a nonconvex relaxation}.\\\indent
According to localised Kinky inference from \cite{c113}, we consider a partition of $\bar{\mathcal{U}}$ by hyperrectangles $\bar{\mathcal{U}}_{H_1},\dots,\bar{\mathcal{U}}_{H_h}$ and its resulting partition $\mathcal{U}_{H_1},\dots,\mathcal{U}_{H_h}$ of $\mathcal{U}$. Since the partition covers the whole input set, the solution of \eqref{UpperboundFixG} is obtained by solving
\begin{align}\label{SubProb}
	\Phi_{\text{AE}}^{E(\mathcal{U}_D\times\mathcal{Y}_D),\mathcal{U}_{H_i},G}=\max_{\substack{(u,y)\in E(\mathcal{U}_D\times\mathcal{Y}_D),\\u\in \mathcal{U}_{H_i},||u||\geq\epsilon}}\frac{||y-G(u)||}{||u||}
\end{align}
for each subset $\mathcal{U}_{H_i},i=1,\dots,h$ and then by taking the maximum over all $\Phi_{\text{AE}}^{E(\mathcal{U}_D\times\mathcal{Y}_D),\mathcal{U}_{H_i},G},i=1,\dots,h$.\\\indent
Due to the increase of constraints with the number of samples of the envelope $E(\mathcal{U}_D\times\mathcal{Y}_D)$ in the optimization problem \eqref{SubProb}, we exploit the notion of local inference of the AE-NLM analogously to localised Kinky inference. 
\begin{defn}[Local inference of AE-NLM]
For each subset ${\mathcal{U}}_{H_1},\dots,{\mathcal{U}}_{H_h}$, we define the local AE-NLM inference 
\begin{align}\label{LocalOpt}
	\max_{\substack{(u,y)\in E(\{(u_{H_i}',y_{H_i}'),(u_{H_i}'',y_{H_i}'')\}),\\u\in \mathcal{U}_{H_i},||u||\geq\epsilon}}\frac{||y-G(u)||}{||u||}
\end{align}
where the two samples $(u_{H_i}',y_{H_i}'),(u_{H_i}'',y_{H_i}'')\in\mathcal{U}_D\times\mathcal{Y}_D$ are chosen such that the inputs $u_{H_i}'$ and $u_{H_i}''$ are the closest samples of ${\mathcal{U}}_D$ to ${\mathcal{U}}_{H_i}$.
\end{defn}
Note that the local inference~\eqref{LocalOpt} is only an upper bound of the global inference~\eqref{SubProb} as $E(\mathcal{U}_D\times\mathcal{Y}_D)\subseteq E(\{(u_{H_i}',y_{H_i}'),(u_{H_i}'',y_{H_i}'')\})$. However, the number of constraints in \eqref{LocalOpt} is reduced significant compared to \eqref{SubProb} and regardless of the number of samples in $\mathcal{U}_D$. Furthermore, the consideration of the two closest data samples for each subset ${\mathcal{U}}_{H_i}$ in \eqref{LocalOpt} is reasonable as these samples mostly generate actives constraints in \eqref{SubProb}. This choice of samples is also motivated by the case $n=1$ where 
the global inference \eqref{SubProb} and the local inference \eqref{LocalOpt} are equivalent.\\\indent 
In the following \DeleteSix{two theorems, we present nonconvex relaxations}\Six{theorem, we present a nonconvex relaxation} based on geometrical arguments to further reduce the computationally complexity of the local inference \eqref{LocalOpt}.
\begin{thm}\label{Relaxation1}
Let two input-output samples $\mathcal{U}_D^\ell\times\mathcal{Y}_D^\ell=\{(u_1,y_1),(u_2,y_2)\}$ be given. Then, the local inference of the AE-NLM \eqref{LocalOpt} is bounded from above by
\begin{equation}\label{NonconvexRelaxation1}
\begin{aligned}
	\max\{\alpha_{\text{AE}}^{E(\mathcal{U}_D^\ell\times\mathcal{Y}_D^\ell),G},\beta_{\text{AE}}^{E(\mathcal{U}_D^\ell\times\mathcal{Y}_D^\ell),G},\gamma_{\text{AE}}^{E(\mathcal{U}_D^\ell\times\mathcal{Y}_D^\ell),G}\}
\end{aligned}
\end{equation} 	
with
\begin{align*}
	\alpha_{\text{AE}}^{E(\mathcal{U}_D^\ell\times\mathcal{Y}_D^\ell),G}&{=}\max_{\substack{u\in \mathcal{U}_{H_i},||u||\geq\epsilon,\\r_2^2\leq r_1^2+r^2, r_1^2\leq r_2^2+r^2 }}\frac{||M(u)-G(u)||+d(u)}{||u||},\\
	\beta_{\text{AE}}^{E(\mathcal{U}_D^\ell\times\mathcal{Y}_D^\ell),G}&{=}\max_{\substack{u\in \mathcal{U}_{H_i},||u||\geq\epsilon,\\r_2^2>r_1^2+r^2}}\frac{||y_1-G(u)||+r_1(u)}{||u||},\\
	\gamma_{\text{AE}}^{E(\mathcal{U}_D^\ell\times\mathcal{Y}_D^\ell),G}&{=}\max_{\substack{u\in \mathcal{U}_{H_i},||u||\geq\epsilon,\\r_1^2>r_2^2+r^2}}\frac{||y_2-G(u)||+r_2(u)}{||u||},	
\end{align*}	
and the geometric variables
\begin{align*}
	r=||&y_1-y_2||,	r_1(u)=L||u-u_1||,	r_2(u)=L||u-u_2||,\\
	d(u)&=\frac{1}{2r}\sqrt{(r^2-(r_2-r_1)^2)((r_1+r_2)^2-r^2)},\\
	M(u)&=y_2+\sqrt{L^2||u-u_2||^2-d(u)^2}\frac{y_1-y_2}{||y_1-y_2||}.	
\end{align*}
\end{thm}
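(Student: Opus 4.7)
The plan is to partition the admissible inputs $u$ into three regions corresponding to the case conditions inside $\alpha_{\text{AE}}$, $\beta_{\text{AE}}$, $\gamma_{\text{AE}}$ and to upper-bound the inner maximization over $y$ separately in each region. For fixed $u$, the envelope constraint reduces to $y\in B(y_1,r_1(u))\cap B(y_2,r_2(u))$, a lens-shaped set whose geometry is controlled by $r, r_1(u), r_2(u)$. Introducing the signed abscissa $a:=(r^2+r_1^2-r_2^2)/(2r)$ of the intersection sphere's center along the segment $y_1 y_2$, the three case conditions correspond precisely to $0\le a\le r$, $a<0$, and $a>r$; and $M(u)$, $d(u)$ are, by classical two-sphere geometry (Heron's formula on the triangle $y_1 y_2 y$ for $y$ on both spheres), the center and radius of the sphere of intersection.

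For Cases $\beta$ and $\gamma$ the bound is immediate from a single triangle inequality: since $y\in B(y_i,r_i)$, one has $||y-G(u)||\le ||y_i-G(u)||+r_i(u)$, so restricting the outer maximization to the corresponding region and dividing by $||u||$ yields $\beta_{\text{AE}}$ or $\gamma_{\text{AE}}$ respectively.

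The main work lies in Case $\alpha$, for which I would establish the containment $B(y_1,r_1)\cap B(y_2,r_2)\subseteq B(M(u),d(u))$ whenever $0\le a\le r$; once this is in hand, the triangle inequality $||y-G(u)||\le ||y-M(u)||+||M(u)-G(u)||\le d(u)+||M(u)-G(u)||$ closes the case and produces $\alpha_{\text{AE}}$. To prove the containment, I would place coordinates so that $y_1=0$ and $y_2=r e_1$, write a generic lens point as $y=a' e_1+w$ with $w\perp e_1$, and show $(a'-a)^2+||w||^2\le d^2$. A case split on the sign of $a'-a$ handles this: if $a'\ge a$, substituting the ball-$1$ constraint $||w||^2\le r_1^2-(a')^2$ and the identity $d^2=r_1^2-a^2$ gives $(a'-a)^2+||w||^2\le d^2-2a(a'-a)\le d^2$, using $a\ge 0$; symmetrically, if $a'\le a$, using the ball-$2$ constraint with $d^2=r_2^2-(r-a)^2$ yields $(a'-a)^2+||w||^2\le d^2+2(r-a)(a'-r)\le d^2$, using $r-a\ge 0$ and $a'\le r$.

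The main obstacle is this containment claim in Case $\alpha$; picking the matching ball constraint on each side of $a'=a$ is what makes the algebra cancel correctly, and the hypotheses $a\ge 0$ and $r-a\ge 0$ of Case $\alpha$ are exactly what is needed for the two linear remainders to have the right sign. Combining the three regional bounds by taking their maximum, and noting that the three case conditions together partition the admissible $u$, completes the argument.
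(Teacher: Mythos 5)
Your proof is correct and follows essentially the same three-case geometric decomposition as the paper's: the conditions $r_2^2>r_1^2+r^2$, $r_1^2>r_2^2+r^2$, and their complement match your abscissa conditions $a<0$, $a>r$, $0\le a\le r$ exactly, and the $\beta$/$\gamma$ cases are handled by the same single triangle inequality. In fact your argument is more complete than the paper's: the paper merely asserts (with reference to its figures) that in the first case every point of the lens $B(y_1,r_1(u))\cap B(y_2,r_2(u))$ lies within distance $d(u)$ of $M(u)$, whereas you actually prove this containment by the coordinate computation with the matching ball constraint on each side of $a'=a$. One minor algebraic slip: in the subcase $a'\le a$, substituting $\|w\|^2\le r_2^2-(a'-r)^2$ and $r_2^2=d^2+(r-a)^2$ gives the remainder $2(a'-a)(r-a)$ rather than $2(r-a)(a'-r)$; this is still nonpositive under the case hypotheses $a'\le a\le r$, so the conclusion is unaffected.
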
 
\begin{proof}
Let $\mathcal{Y}_{E(\mathcal{U}_D^\ell\times\mathcal{Y}_D^\ell)}(u)$ denote the projection of $E(\mathcal{U}_D^\ell\times\mathcal{Y}_D^\ell)$ for an input $u\in\mathcal{U}$ on $\mathcal{Y}$, i.e., the set of possible outputs for input $u$ included in $E(\mathcal{U}_D^\ell\times\mathcal{Y}_D^\ell)$. Since $\mathcal{U}_D^\ell\times\mathcal{Y}_D^\ell$ contains two data samples, $\mathcal{Y}_{E(\mathcal{U}_D^\ell\times\mathcal{Y}_D^\ell)}(u)$ corresponds to the intersection of two $n-1$-dimensional spheres with center $y_1$ and $y_2$, respectively, and radius $r_1(u)$ and $r_2(u)$, respectively. Due to the Lipschitz continuity of $N$, $\mathcal{Y}_{E(\mathcal{U}_D^\ell\times\mathcal{Y}_D^\ell)}(u)$ is non-empty and contains a $n-2$ dimensional sphere with diameter $2d(u)$ and center $M(u)$. To derive the upper bound \eqref{NonconvexRelaxation1}, we bound the distance of $G(u)$ to all outputs in $\mathcal{Y}_{E(\mathcal{U}_D^\ell\times\mathcal{Y}_D^\ell)}(u)$. To this end, we distinguish between three possible cases depending on the location of the center $M(u)$ and then take the maximum over those cases as in \eqref{NonconvexRelaxation1}.\\\indent
In the first case, the center $M(u)$ lies between $y_1$ and $y_2$ as depicted in Figure~\ref{Fig:ProofCase1}.\DeleteSix{
\begin{figure}
	\begin{center}
		\includegraphics[width=0.85\linewidth]{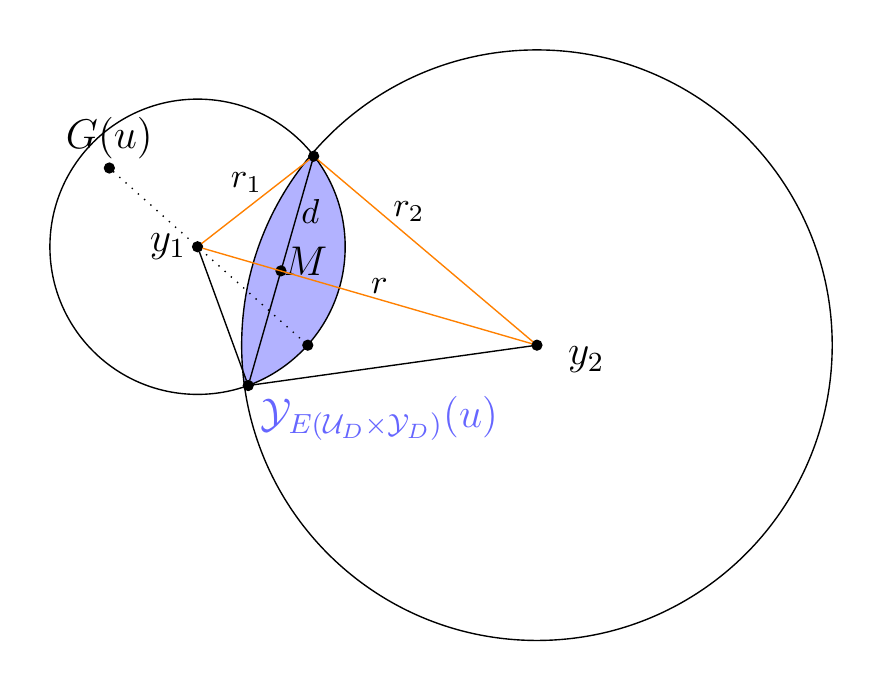}
	\end{center}
	\caption{Set of possible outputs $\mathcal{Y}_{E(\mathcal{U}_D^\ell\times\mathcal{Y}_D^\ell)}(u)$ by Lipschitz continuity at two samples. The case with center $M(u)$ between the output samples $y_1$ and $y_2$.}
	\label{Fig:ProofCase1}
\end{figure}}\Six{
\begin{figure}
	\begin{center}
		\includegraphics[width=0.68\linewidth]{Proof_Case1}
	\end{center}
	\caption{Set of possible outputs $\mathcal{Y}_{E(\mathcal{U}_D^\ell\times\mathcal{Y}_D^\ell)}(u)$ by Lipschitz continuity at two samples. The case with center $M(u)$ between the output samples $y_1$ and $y_2$.}
	\label{Fig:ProofCase1}
\end{figure}
}
Since the distance of $M(u)$ to any point in $\mathcal{Y}_{E(\mathcal{U}_D\times\mathcal{Y}_D)}(u)$ is less than or equal to the half of the diameter $2d(u)$, the triangle inequality yields 
\begin{equation*}
	\max_{y\in\mathcal{Y}_{E(\mathcal{U}_D^\ell\times\mathcal{Y}_D^\ell)}(u)}||y-G(u)||\leq ||M(u)-G(u)||+d(u)
\end{equation*}
which corresponds to $\alpha_{\text{AE}}^{E(\mathcal{U}_D^\ell\times\mathcal{Y}_D^\ell),G}$. \\\indent
The second case is depicted in Figure~\ref{Fig:ProofCase2} where the center $M(u)$ doesn't lie between $y_1$ and $y_2$ or even one sphere is completely included in the other.\DeleteSix{
\begin{figure}
	\begin{center}
		\includegraphics[width=0.7\linewidth]{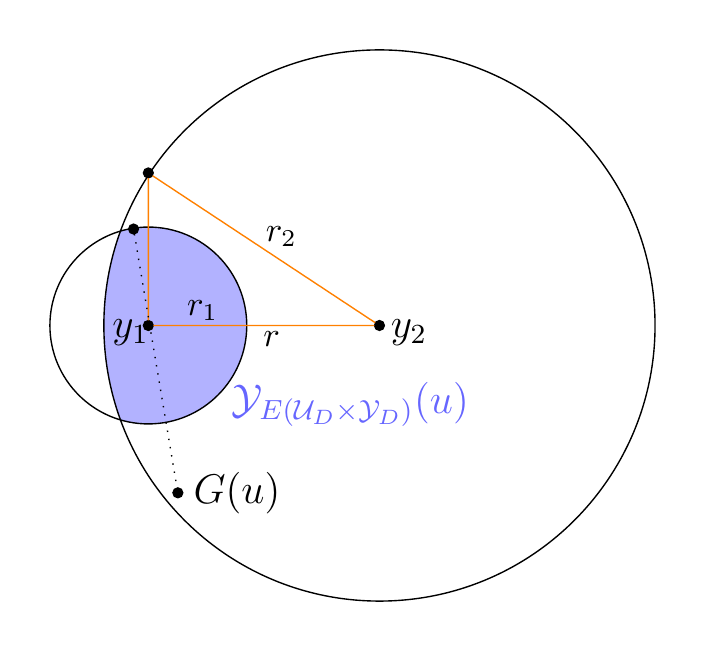}
	\end{center}
	\caption{Set of possible outputs $\mathcal{Y}_{E(\mathcal{U}_D^\ell\times\mathcal{Y}_D^\ell)}(u)$ by Lipschitz continuity at two samples. The case with center $M(u)$ not between the output samples $y_1$ and $y_2$.}
	\label{Fig:ProofCase2}
\end{figure}}\Six{
\begin{figure}
	\begin{center}
		\includegraphics[width=0.55\linewidth]{Proof_Case2}
	\end{center}
	\caption{Set of possible outputs $\mathcal{Y}_{E(\mathcal{U}_D^\ell\times\mathcal{Y}_D^\ell)}(u)$ by Lipschitz continuity at two samples. The case with center $M(u)$ not between the output samples $y_1$ and $y_2$.}
	\label{Fig:ProofCase2}
\end{figure}}
This case is characterized by $r_2(u)^2\geq r_1(u)^2+r(u)^2$ if $y_1$ lies in $\mathcal{Y}_{E(\mathcal{U}_D^\ell\times\mathcal{Y}_D^\ell)}(u)$ (by $r_1(u)^2\geq r_2(u)^2+r(u)^2$ if $y_2$ lies in $\mathcal{Y}_{E(\mathcal{U}_D^\ell\times\mathcal{Y}_D^\ell)}(u)$) as follows from the orange triangle in Figure~\ref{Fig:ProofCase2}. Since $y_1$ ($y_2$) lies in $\mathcal{Y}_{E(\mathcal{U}_D^\ell\times\mathcal{Y}_D^\ell)}(u)$,
\begin{equation*}
	\max_{y\in\mathcal{Y}_{E(\mathcal{U}_D\times\mathcal{Y}_D)}(u)}||y-G(u)||\leq ||y_{1(2)}-G(u)||+r_{1(2)}(u)
\end{equation*}
based on the triangle inequality. This results in $\beta_{\text{AE}}^{E(\mathcal{U}_D\times\mathcal{Y}_D),G}$ ($\gamma_{\text{AE}}^{E(\mathcal{U}_D\times\mathcal{Y}_D),G}$). 
\end{proof}
\DeleteSix{ 
\begin{thm}\label{Relaxation2}
Let two input-output samples be given $\mathcal{U}_D^\ell\times\mathcal{Y}_D^\ell=\{(u_1,y_1),(u_2,y_2)\}$. Then, the local AE-NLM inference \eqref{LocalOpt} is bounded from above by
\begin{equation}\label{NonconvexRelaxation2}
\begin{aligned}
	\max_{u\in \mathcal{U}_{H_i},||u||\geq\epsilon}\frac{R_1(u)+R_2(u)}{||u||}
\end{aligned}
\end{equation} 
with 
\begin{align*}
	R_{1(2)}(u)&=w_{1(2)}(u)(||y_{1(2)}-G(u)||+r_{1(2)}(u)),\\
	r_{1(2)}(u)&=L||u-u_{1(2)}||
\end{align*}
and the weightings $w_1(u)\geq0$ and $w_2(u)\geq0$, which satisfy $w_1(u)+w_2(u)=1$ for all $u\in\mathcal{U}_{H_i}$, e.g., $w_1(u)=w_2(u)=1/2$.
\end{thm}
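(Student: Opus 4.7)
The plan is to reduce the bound to two separate applications of the triangle inequality, one anchored at $y_1$ and one anchored at $y_2$, and then to glue them together with a convex combination. This mirrors the geometric intuition of Theorem~\ref{Relaxation1} but avoids the three case distinctions by paying the (mild) price of introducing weights $w_1(u),w_2(u)$.

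First I would fix an arbitrary admissible pair $(u,y)$ with $u\in\mathcal{U}_{H_i}$, $\|u\|\geq\epsilon$, and $(u,y)\in E(\{(u_1,y_1),(u_2,y_2)\})$. Membership in the envelope, combined with Assumption~\ref{AssuLipschitz}, means exactly $\|y-y_1\|\leq L\|u-u_1\|=r_1(u)$ and $\|y-y_2\|\leq L\|u-u_2\|=r_2(u)$. From here the triangle inequality applied twice yields
\begin{align*}
\|y-G(u)\|&\leq \|y-y_1\|+\|y_1-G(u)\|\leq r_1(u)+\|y_1-G(u)\|,\\
\|y-G(u)\|&\leq \|y-y_2\|+\|y_2-G(u)\|\leq r_2(u)+\|y_2-G(u)\|.
\end{align*}

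Next I would form the convex combination: for any $w_1(u),w_2(u)\geq 0$ with $w_1(u)+w_2(u)=1$, multiplying the two inequalities by $w_1(u)$ and $w_2(u)$ respectively and adding gives
\begin{equation*}
\|y-G(u)\|=\bigl(w_1(u)+w_2(u)\bigr)\|y-G(u)\|\leq R_1(u)+R_2(u).
\end{equation*}
Dividing by $\|u\|>0$ (permitted because $\|u\|\geq\epsilon>0$) and then taking the supremum over $y$ in the projection $\mathcal{Y}_{E(\mathcal{U}_D^\ell\times\mathcal{Y}_D^\ell)}(u)$ and over $u\in\mathcal{U}_{H_i}$ with $\|u\|\geq\epsilon$ produces exactly the upper bound~\eqref{NonconvexRelaxation2}.

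There is no genuine analytic obstacle here; the argument is one line once the envelope constraints are unpacked. The only subtlety worth flagging is that the weights are allowed to depend on $u$, which is important because it lets the user trade between the two anchor samples depending on which of $r_1(u),r_2(u)$ is smaller in a given region of $\mathcal{U}_{H_i}$. A uniform choice like $w_1=w_2=1/2$ already gives a valid bound, but a $u$-dependent choice (e.g.\ favouring whichever sample is nearer to $u$) can substantially tighten the relaxation. The proof above makes no use of the specific form of the weights beyond nonnegativity and summing to one, so the bound holds simultaneously for every admissible weighting function.
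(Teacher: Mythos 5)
Your proof is correct and follows essentially the same route as the paper: the triangle inequality anchored at each of $y_1$ and $y_2$ gives $\|y-G(u)\|\leq \|y_i-G(u)\|+r_i(u)$ for $i=1,2$, and a convex combination of these two bounds (equivalently, bounding $\min_{i=1,2}$ by any convex combination, as the paper phrases it) yields \eqref{NonconvexRelaxation2}. The only cosmetic difference is that the envelope membership alone already gives $\|y-y_i\|\leq r_i(u)$ by definition, so Assumption~\ref{AssuLipschitz} is not needed at this step.
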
 
\begin{proof}
The triangular inequality immediately yields
\begin{align*}
	&\max_{y\in\mathcal{Y}_{E(\mathcal{U}_D^\ell\times\mathcal{Y}_D^\ell)}(u)}||y-G(u)||\\
	&\hspace{3cm}\leq \min_{i=1,2}\{||y_i-G(u)||+r_i(u)\}\\
	&\hspace{3cm}\leq w_1(u)(||y_1-G(u)||+r_1(u))\\
	&\hspace{3cm}\quad+w_2(u)(||y_2-G(u)||+r_2(u))
\end{align*}
for any weightings $w_i(u)\geq0, i=1,2$ with $w_1(u)+w_2(u)=1$. 
\end{proof}}
\DeleteSix{Although the relaxations from Theorem~\ref{Relaxation1} and \ref{Relaxation2} are nonconvex, the complexity of their optimization problems \eqref{NonconvexRelaxation1} and \eqref{NonconvexRelaxation2}}\Six{Although the relaxation from Theorem~\ref{Relaxation1} is nonconvex, the complexity of its optimization problems \eqref{NonconvexRelaxation1}} is significant lower compared to \eqref{LocalOpt} as the optimization over $y\in\mathbb{R}^n$ is avoided. Therefore, \DeleteSix{the relaxations \eqref{NonconvexRelaxation1} and \eqref{NonconvexRelaxation2} require}\Six{the relaxation \eqref{NonconvexRelaxation1} requires} the optimization of $\mu$ variables because the input set $\mathcal{U}$ is spanned by a $\mu$-dimensional orthonormal basis \eqref{InputSet}. \\\indent
So far the output trajectories of $\mathcal{Y}_D$ are assumed to be measured without noise. However, we can adapt the envelope $E(\mathcal{U}_D\times\mathcal{Y}_D)$ and the presented relaxation\DeleteSix{s} to provide a guaranteed upper bound on the AE-NLM for noisy measurements as shown in the next remark.
\begin{rmk}\label{RmkNoise2}
If the measured output $\tilde{y}$ of the system $N(u)$ is corrupted by additive and bounded noise $v$, i.e.,
\begin{equation*}
	\tilde{y}=N(u)+v,\quad v^Tv\leq\delta^2,
\end{equation*}
then the Lipschitz continuity implies
\begin{align*}
	||N(u')-\tilde{y}||\leq L||u'-u||+\delta.
\end{align*} 
Hence, we increase the radii $r_1(u)$ and $r_2(u)$ in \DeleteSix{relaxations \eqref{NonconvexRelaxation1} and \eqref{NonconvexRelaxation2}}\Six{Theorem~\ref{Relaxation1}} by $\delta$ to ensure a guaranteed upper bound on the AE-NLM. Analogously, if the noise exhibits a signal-to-noise-ration $\delta$, i.e., $v^Tv\leq\delta^2y^Ty$, then the Lipschitz continuity and the assumption $N(0)=0$ imply
\begin{equation*}
	||v||^2\leq\delta^2||N(u)||^2\leq\delta^2L^2||u||^2\leq\delta^2L^2||u_H||^2,
\end{equation*}
where $u_H$ denotes the largest input of the considered subset of the partition with respect to the Euclidean norm. Thereby, we increase the radii $r_1(u)$ and $r_2(u)$ by $\delta L||u_H||$.
\end{rmk}

\subsection{Iterative scheme for AE-NLM inference}\label{Sampling}

In the previous section, local inferences and \DeleteSix{two nonconvex relaxations are}\Six{a nonconvex relaxation are} studied to derive the (global) inference of the AE-NLM \eqref{UpperboundFixG} for given data samples. To improve the guaranteed upper bound, further experiments can be evaluated iteratively on the plant. We establish in the following such an iterative procedure similar to a branch-and-bound algorithm.
\begin{algo}[Iterative scheme for AE-NLM inference]\label{IterScheme}\indent
\begin{itemize}
	\item[1)] Suppose a set of input-output samples are given. Initially, compute the linear approximation model $G$ that minimizes the maximal distance to the data samples according to the semidefinite-program in \cite{c100}. Moreover, define a partition of $\bar{\mathcal{U}}$ by hyperrectangles $\bar{\mathcal{U}}_{H_1},\dots,\bar{\mathcal{U}}_{H_h}$ and compute the local AE-NLM inference \eqref{LocalOpt} and its maximizing\ \textquoteleft worst-case\textquoteright\ input for all $\bar{\mathcal{U}}_{H_i},i=1,\dots,h$. Set the number of iterations $k$ to zero.
	\item[2) ] Identify the hyperrectangle $\bar{\mathcal{U}}_{H_{i^*}}$ with the largest local AE-NLM inference ${\Phi_{\text{AE}}^{\mathcal{U},G*}}$ and add this to the sequence $\Phi_{\text{AE}}^{\mathcal{U},G}(0),\dots,\Phi_{\text{AE}}^{\mathcal{U},G}(k):=\Phi_{\text{AE}}^{\mathcal{U},G*}$. Moreover, add the corresponding\ \textquoteleft worst-case\textquoteright\ amplitudes $\bar{u}^*\in\bar{\mathcal{U}}_{H^*}$ to the sequence $\bar{u}^*(0),\dots,\bar{u}^*(k):=\bar{u}^*$.
	\item[3) ] Divide hyperrectangle $\bar{\mathcal{U}}_{H_{i^*}}$ into two hyperrectangles $\bar{\mathcal{U}}^1_{H_{i^*}}$ and $\bar{\mathcal{U}}^2_{H_{i^*}}$ by a $\mu-1$ dimensional hyperplane which is orthogonal to one dimension and divides the largest edge of $\bar{\mathcal{U}}_{H_{i^*}}$. Moreover, the hyperplane contains
	\begin{itemize}
		\item [3a) ] $\bar{u}^*$ if $1/\alpha<\text{vol}(\bar{\mathcal{U}}^1_{H_{i^*}})/\text{vol}(\bar{\mathcal{U}}^2_{H_{i^*}})<\alpha$ for some chosen $\alpha>0$;
		\item [3b) ] else the middle point of $\bar{\mathcal{U}}_{H_{i^*}}$.
	\end{itemize}
	\item[4) ] Determine the output of the plant for
	\begin{itemize}
		\item [4a) ] $\bar{u}^*$ in case of 3a);
		\item [4b) ] the middle point of $\bar{\mathcal{U}}_{H_{i^*}}$ in case of 3b).
	\end{itemize}
	\item[5) ] Compute the local inference of the AE-NLM for hyperrectangles $\bar{\mathcal{U}}^1_{H_{i^*}}$ and $\bar{\mathcal{U}}^2_{H_{i^*}}$. Saturate these local AE-NLM inferences by the local inference of $\bar{\mathcal{U}}_{H_{i^*}}$.
	\item[6) ] Set $\bar{\mathcal{U}}_{H_{i^*}}:=\bar{\mathcal{U}}^1_{H_{i^*}}$, $\bar{\mathcal{U}}_{H_{h+1}}=\bar{\mathcal{U}}^2_{H_{i^*}}$, $h:=h+1$, and $k:=k+1$. Go to Step 2). 	
	\end{itemize}		
\end{algo}
An illustration of Algorithm~\ref{IterScheme} is depicted in Figure~\ref{Fig:InputSampling}.
\begin{figure}
	\begin{center}
		\includegraphics[width=0.9\linewidth]{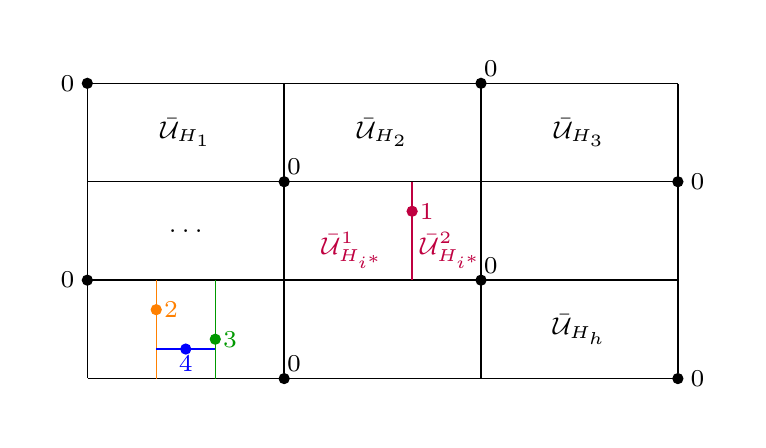}
	\end{center}
	\caption{Schematic illustration of the sampling procedure of the input space $\bar{\mathcal{U}}$ by Algorithm~\ref{IterScheme}. The numbers indicate the sampling in iteration $k$.}
	\label{Fig:InputSampling}
\end{figure}
In the sequel, we comment on Algorithm~\ref{IterScheme} and on some properties of Algorithm~\ref{IterScheme} more thoroughly.\\\indent
In step 3), we suggest one, among others, proceedings for dividing the hyperrectangle $\bar{\mathcal{U}}_{H^*}$. In particular, the decision between 3a) and 3b) is required to prove convergence of the sequence $\Phi_{\text{AE}}^{\mathcal{U},G}(0),\Phi_{\text{AE}}^{\mathcal{U},G}(1),\dots$ to the true AE-NLM in Theorem~\ref{ThmConv}. Furthermore, the new evaluated input from step 4) is taken into account in the computation of the local AE-NLM inferences in step 5) since its distance to the new hyperrectangles is zero. \\\indent
First property of Algorithm~\ref{IterScheme} is that its complexity does not increase with further iterations, as the local AE-NLM inference \eqref{LocalOpt} of two hyperrectangles is computed in each iteration. Hence, stopping the iteration after some iterations due to too large computation time won't occur once the algorithm can be initialized. Second, the sequence of global inferences $\Phi_{\text{AE}}^{\mathcal{U},G}(\cdot)$ from step 2) is monotone decreasing, as the local AE-NLM inferences of the hyperrectangles $\bar{\mathcal{U}}^1_{H_{i^*}}$ and $\bar{\mathcal{U}}^2_{H_{i^*}}$ are saturated by the local inference of $\bar{\mathcal{U}}_{H_{i^*}}$. Without saturation, the sequence could increase as the envelopes considered for $\bar{\mathcal{U}}^1_{H_{i^*}}$ and $\bar{\mathcal{U}}^2_{H_{i^*}}$, respectively, are not necessarily a subset of the envelope considered for $\bar{\mathcal{U}}_{H_{i^*}}$. Finally, the sequence of global inferences $\Phi_{\text{AE}}^{\mathcal{U},G}(\cdot)$ converges to the true AE-NLM as proven in the following theorem.
\begin{thm}[Convergence]\label{ThmConv}
The sequence of global inferences of the AE-NLM $\Phi_{\text{AE}}^{\mathcal{U},G}(\cdot)$ from step 2) of Algorithm~\ref{IterScheme} converges to the solution of the left-hand side of \eqref{Upperbound}, i.e.,
\begin{equation*}
	\lim_{k\rightarrow\infty}\Phi_{\text{AE}}^{\mathcal{U},G}(k)=\Phi^{\mathcal{U},G}_{\text{AE}}.
\end{equation*}
\end{thm}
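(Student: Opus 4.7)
The plan is to combine the monotonicity of the sequence with a branch-and-bound contradiction argument. First, observe that $\Phi_{\text{AE}}^{\mathcal{U},G}(k)$ is monotone non-increasing (as already noted in the discussion preceding the theorem) and bounded below by $\Phi_{\text{AE}}^{\mathcal{U},G}$, because \eqref{LocalOpt} over-approximates $\max_{u\in\mathcal{U}_{H_i},\|u\|\geq\epsilon}\|N(u)-G(u)\|/\|u\|$ on every hyperrectangle. Consequently the limit $\Phi^{\infty}:=\lim_{k\to\infty}\Phi_{\text{AE}}^{\mathcal{U},G}(k)$ exists and satisfies $\Phi^{\infty}\geq\Phi_{\text{AE}}^{\mathcal{U},G}$. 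It remains to rule out strict inequality, which I would do by contradiction.

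Assuming $\Phi^{\infty}>\Phi_{\text{AE}}^{\mathcal{U},G}$, every iteration selects a hyperrectangle whose local inference is at least $\Phi^{\infty}$, so the binary tree of hyperrectangles produced by Algorithm~\ref{IterScheme} is infinite. By K\"onig's lemma the tree contains an infinite nested chain $\bar{\mathcal{U}}_{H^{(0)}}\supset\bar{\mathcal{U}}_{H^{(1)}}\supset\cdots$ of selected (and hence split) hyperrectangles. The central geometric claim is $\mathrm{diam}(\bar{\mathcal{U}}_{H^{(j)}})\to 0$. My plan for this is to combine (i) the rule that each split is performed along the \emph{longest} edge of the parent with (ii) the fallback 3b) that, whenever 3a) would create too skewed pieces, uses the midpoint split; together these guarantee that the new edge introduced by a split has length at most $\max\{\alpha/(1+\alpha),1/2\}\cdot\ell$ of the parent's split edge $\ell$. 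A round-robin accounting over any $\mu$ consecutive levels of the chain then shows that all $\mu$ edge lengths must shrink by a factor bounded strictly below $1$, producing geometric decay of the diameter to zero.

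Once diameter shrinkage is available, the remainder is a two-sample Lipschitz estimate. Step 4 of Algorithm~\ref{IterScheme} places a freshly evaluated sample $u_{*}$ on the splitting hyperplane, so every $\bar{\mathcal{U}}_{H^{(j)}}$ contains (or borders) a sample $(u_{*},y_{*})\in\mathcal{U}_D\times\mathcal{Y}_D$. For any admissible $(u,y)$ in \eqref{LocalOpt} with $u\in\mathcal{U}_{H^{(j)}}$ and $\eta_j:=\mathrm{diam}(\bar{\mathcal{U}}_{H^{(j)}})$, Assumption~\ref{AssuLipschitz} applied to $y$ through $(u_{*},y_{*})$ and to $N(u)$ through $(u_{*},N(u_{*}))$ gives $\|y-N(u)\|\leq\|y-y_{*}\|+\|y_{*}-N(u)\|\leq 2L\eta_j$. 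Using $\|u\|\geq\epsilon$ then yields
\begin{equation*}
\Phi_{\text{AE}}^{E(\mathcal{U}_D\times\mathcal{Y}_D),\mathcal{U}_{H^{(j)}},G}\leq\Phi_{\text{AE}}^{\mathcal{U},G}+\frac{2L\eta_j}{\epsilon}\xrightarrow{j\to\infty}\Phi_{\text{AE}}^{\mathcal{U},G},
\end{equation*}
whereas the selection rule forces the left-hand side to stay at least $\Phi^{\infty}$ along the chain, contradicting $\Phi^{\infty}>\Phi_{\text{AE}}^{\mathcal{U},G}$.

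I expect the main obstacle to be the diameter-decay claim along every infinite branch, since it depends delicately on the interplay between ``split the longest edge'' and the balanced-split fallback 3b); without 3b) an adversarial worst-case $\bar{u}^{*}$ placed near a corner could leave one edge essentially unchanged for arbitrarily many levels and the argument would break down. The other ingredients --- monotone convergence, K\"onig's lemma, and the two-sample Lipschitz estimate --- are routine once this geometric fact is secured.
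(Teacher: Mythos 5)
Your proof is correct and rests on the same underlying mechanism as the paper's: monotone convergence of the non-increasing, lower-bounded sequence, followed by a contradiction in which the shrinking radii $r_1(u),r_2(u)$ force the local inferences on ever-smaller selected hyperrectangles below the putative limit, which is incompatible with step 2) always picking the maximizer. The difference is in how the ``radii shrink to zero'' step is formalized. The paper argues loosely about the sequence of worst-case inputs ``converging to a set $\mathcal{U}_c\nsupseteq\mathcal{U}^*$'' and simply asserts that the combination of 3a)/3b) and the sampling in step 4) makes the radii arbitrarily small there; you instead extract an infinite nested branch of selected hyperrectangles via K\"onig's lemma and prove diameter decay from the longest-edge rule plus the balancedness safeguard of 3b). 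Your version is the more rigorous one --- it supplies exactly the compactness and geometric-contraction arguments the paper glosses over, and your two-sample Lipschitz estimate $\|y-N(u)\|\leq 2L\eta_j$ is a cleaner quantitative form of the paper's bound \eqref{IneqProof} with $\rho$ small. Two small repairs: the per-split contraction factor on the larger child is $\max\{1/(1+\alpha),1/2\}$ (your $\alpha/(1+\alpha)$ bounds the \emph{smaller} piece), and the round-robin accounting should be phrased as ``within any $\mu$ consecutive splits of the chain, either every dimension is split once, or some dimension is split twice while being the longest edge; in both cases $\ell_{\max}$ contracts by the fixed factor,'' since a single elongated dimension may legitimately be split several times in a row. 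Neither issue affects the conclusion.
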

\begin{proof}
The sequence $\Phi_{\text{AE}}^{\mathcal{U},G}(\cdot)$ is lower bounded by zero and non-increasing which implies its convergence. We suppose that the sequence $\Phi_{\text{AE}}^{\mathcal{U},G}(\cdot)$ doesn't converge to $\Phi^{\mathcal{U},G}_{\text{AE}}$. Thus, the sequence of corresponding\ \textquoteleft worst-case\textquoteright\ inputs ${u}^*(\cdot)\in\mathcal{U}$ from step 2) doesn't converge to the set of inputs ${\mathcal{U}}^*\subset{\mathcal{U}}$ which solve the left-hand side of \eqref{Upperbound}, but to a subset ${\mathcal{U}}_c\nsupseteq{\mathcal{U}}^*$. Due to the convergence to ${\mathcal{U}}_c$, the distinction of 3a) and 3b), and the sampling in step  4), we can choose ${\mathcal{U}}_c$ such that the radii $r_{1}(u)$ and $r_{2}(u)$ from relaxation\DeleteSix{s} \eqref{NonconvexRelaxation1}\DeleteSix{ and \eqref{NonconvexRelaxation2}} for each subset ${\mathcal{U}}_{H_i}\subseteq{\mathcal{U}}_c$ are bounded by an arbitrary small $\rho>0$, i.e.,
\begin{equation*}
	\max\{r_{1}(u),r_{2}(u)\}<\rho
\end{equation*} 
for all $u\in{\mathcal{U}}_{H_i}$ and all subsets ${\mathcal{U}}_{H_i}\subseteq{\mathcal{U}}_c$. Therefore, the local inference of the AE-NLM \eqref{LocalOpt} using the relaxation\DeleteSix{s} \eqref{NonconvexRelaxation1}\DeleteSix{ and \eqref{NonconvexRelaxation2}, respectively,} is bounded from above by
\begin{equation}\label{IneqProof}
	\max_{\substack{u\in \mathcal{U}_{H_i},||u||\geq\epsilon,\\i=1,2}}\frac{||y_i-G(u)||+\rho}{||u||}
\end{equation}
for each subset ${\mathcal{U}}_{H_i}\subseteq{\mathcal{U}}_c$.
The convergence
\begin{equation*}
	\lim_{\rho\rightarrow 0}\max_{\substack{u\in \mathcal{U}_{H_i},||u||\geq\epsilon,\\i=1,2}}\frac{||y_i-G(u)||+\rho}{||u||}=\frac{||y_1-G(u_1)||}{||u_1||}
\end{equation*}
implies that $\rho$ can be chosen small enough such that \eqref{IneqProof} is for all ${\mathcal{U}}_{H_i}\subseteq{\mathcal{U}}_c$ less than
\begin{equation*}
	\frac{||N({u}^*)-G(u^*)||}{||{u}^*||}, {u}^*\in{\mathcal{U}}^*.
\end{equation*}
Together with the search for the largest local AE-NLM inference in step 2) of Algorithm~\ref{IterScheme}, this leads to a contradiction for the convergence of the sequence ${u}^*(\cdot)\in\mathcal{U}$ to ${\mathcal{U}}_c\nsupseteq{\mathcal{U}}^*$. Hereby, the sequence ${u}^*(\cdot)$ converges to ${\mathcal{U}}^*$, and therefore $\Phi_{\text{AE}}^{\mathcal{U},G}(\cdot)$ to $\Phi^{\mathcal{U},G}_{\text{AE}}$.
\end{proof}

\DeleteSix{
\subsection{Extensions of Algorithm~\ref{IterScheme}}\label{ExtenScheme}

In this section, we present some modifications of Algorithm~\ref{IterScheme}. Remark~\ref{LocalLipschitz} and \ref{LocalInf} constitute two approaches to improve the estimation of the local inference~\eqref{LocalOpt}. 
\begin{rmk}\label{LocalLipschitz}
The consideration of local inferences of the AE-NLM ~\eqref{LocalOpt} in Algorithm~\ref{IterScheme} suggest to exploit local Lipschitz constants instead of a global Lipschitz constant. This leads to a less conservative uncertainty description of the envelope $E(\mathcal{U}_D\times\mathcal{Y}_D)$. For example, the online learning algorithms for Lipschitz constants in \cite{c102} yield estimations on the local Lipschitz constants.
\end{rmk}
\begin{rmk}\label{LocalInf}
Instead of analyzing the envelope of only one pair of samples for the computation of the local AE-NLM inferences~\eqref{LocalOpt}, we could compute \eqref{LocalOpt} for multiple pairs of samples and take their minimum as local inference. Moreover, instead of the saturation in step 5), we can guarantee a non-increasing sequence of AE-NLM inferences $\Phi_{\text{AE}}^{\mathcal{U},G}(\cdot)$ by calculating local inferences for pairs of samples until a smaller value as the local inference of the original hyperrectangle is obtained. 
\end{rmk}

In the initialisation of Algorithm~\ref{IterScheme}, we derive a linear approximation model of the nonlinear system from a given set of input-output trajectories. However, the original optimization \eqref{Upperbound} requires a combined optimization over the input set $\mathcal{U}$ and a set of linear models $\mathcal{G}$. For that reason, we include an iterative calculation of the linear model into Algorithm~\ref{IterScheme} in the next remark.
\begin{rmk}
To solve \eqref{Upperbound}, we adapt Algorithm~\ref{IterScheme} by an alternating optimization scheme similar to the alternating direction method of multipliers (ADMM) described in \cite{c107}. Thus, the iteration consists of two separated optimizations
\begin{align}	
	G^{k+1}=\underset{G\in\mathcal{G}}{\text{arginf}}\max_{(u_i,y_i)\in\mathcal{U}_D^k\times\mathcal{Y}_D^k}\frac{||y_i-G(u_i)||}{||u_i||}&,\label{ADMM1}\\
	\Phi^{E(\mathcal{U}_D^k\times\mathcal{Y}_D^k),G^{k+1}}_{\text{AE}}{=}\max_{\substack{(u,y)\in E(\mathcal{U}_D^k\times\mathcal{Y}_D^k),\\||u||\geq\epsilon}}\frac{||y{-}G^{k+1}(u)||}{||u||}&,\label{ADMM2}
\end{align}
where $\mathcal{U}_D^k\times\mathcal{Y}_D^k$ denotes the set of all samples after $k$ iterations following step 2)-4). Even though \eqref{ADMM1} can be solved efficiently by a semidefinite programming from \cite{c100}, the number of constraints increases with iterations. Moreover, the local AE-NLM inferences are required for all hyperrectangles to solve \eqref{ADMM2}. Thus, the complexity of the alternating optimization increases in each iteration. Thereby, this procedure is only tractable as an alternative to the initialization step 1). Furthermore, we cannot guarantee convergence of the sequence $G^0,G^1,\dots$ to the optimal linear model of \eqref{Upperbound}. 
\end{rmk}

\overlength{
Some optimization solvers require initially a point in the feasible set. To satisfy this for the optimization problems of relaxation from Theorem~\ref{Relaxation1}, we propose another sampling strategy as in step 4) in the last Remark~\ref{SamplingStrat}.

\begin{rmk}\label{SamplingStrat}
Instead of sampling once as in step 4), we could require two experiments in each iteration with inputs $\bar{u}_{H^*_{1}}\in\bar{\mathcal{U}}_{H^*_1}$ and $\bar{u}_{H^*_{2}}\in\bar{\mathcal{U}}_{H^*_2}$. These inputs are chosen such that the points at two opposite corners of hyperrectangle $\bar{\mathcal{U}}_{H^*_1}$ and $\bar{\mathcal{U}}_{H^*_2}$ are sampled. Note that the inputs of opposite corners of $\bar{\mathcal{U}}_{H^*}$, denoted by $\bar{u}_{H^*}'\in\bar{\mathcal{U}}_{H^*_1}$ and $\bar{u}_{H^*}''\in\bar{\mathcal{U}}_{H^*_2}$, are already sampled in some previous iteration step. Regarding the feasible sets of the optimization problems of \eqref{NonconvexRelaxation1} for $\bar{\mathcal{U}}_{H^*_1}$, either $\bar{u}_{H^*_{1}}=:u_1$ is a feasible point for $r_2^2>r_1^2+r^2$ or the feasible set is empty. Analogously, $\bar{u}_{H^*}'=:u_2$ is feasible for $r_1^2>r_2^2+r^2$ otherwise the feasible set is empty. Moreover, $1/2(u_1+u_2)$ is always a feasible point of $r_1^2-r_2^2\leq r^2$ and $r_2^2-r_1^2\leq r^2$. Furthermore, a reasonable choice of weightings for the relaxation \eqref{NonconvexRelaxation2} for $\bar{\mathcal{U}}_{H^*_{1}}$ is 
\begin{equation*}
	w_1(u)=1-\frac{||u-u_{H^*_{1}}||}{||u_{H^*_{1}}-u_{H^*}'||}, w_2(u)=\frac{||u-u_{H^*}'||}{||u_{H^*_{1}}-u_{H^*}'||}
\end{equation*}
with $w_1(u),w_2(u)\geq0$, as the basis of input signals is orthonormal, and hence $||u_{H^*_{1}}-u_{H^*}'||$ is the maximal distance of inputs in $\bar{\mathcal{U}}_{H^*_{1}}$. Analogical results hold for the feasible sets and weightings for $\bar{\mathcal{U}}_{H^*_{2}}$. 
\end{rmk}}

}

\section{Numerical Example}\label{SecEx}
 
In this section, we apply Algorithm~\ref{IterScheme} to conclude on the AE-NLM of the input-output behaviour of the SISO system 
\begin{align*}
	\dot{x}_1&=-3x_1+4x_2+2x_1^2-0.2\sin(3x_1)+u,\\
	\dot{x}_2&=-x_1+0.6x_2-0.5x_1^3,\quad x(0)=0,\\
	y&=x_1.
\end{align*}
The discrete-time input-output trajectories are drawn based on simulations for 30 time steps by Euler integration with $\Delta t=0.2\,\text{s}$. Similar to \cite{c100}, the input set is spanned by 
\begin{align*}
	\mathcal{U}{=}\{u\in\mathbb{R}^n:u{=}\sum_{i=1}^{2}\alpha_i \frac{v_i}{||v_i||}, (\alpha_1,\alpha_2)\in[0,4]^2\backslash[0,0.1]^2\},
\end{align*}
where $v_1$ and $v_2$ denote the stacked time-samplings of the basis signals $v_1(t)=\sin(\pi/3t)$ and $v_2(t)=\sin(\pi t)$. The Lipschitz constant is estimated to $L=1.04$. Algorithm~\ref{IterScheme} is initialized by the computation of a linear approximation model described by a lower Toeplitz matrix for 20 samples. The free parameter $\alpha$ of Algorithm~\ref{IterScheme} is chosen to $0.1$.\\\indent
Figure~\ref{Fig.NLM_iteration} shows the simulated sequence of AE-NLM inferences $\Phi_{\text{AE}}^{\mathcal{U},G}(\cdot)$ of Algorithm~\ref{IterScheme}\DeleteSix{ using the relaxation from Theorem~\ref{Relaxation1} and \ref{Relaxation2} with $w_1=w_2=1/2$}. 
\begin{figure}
	\centering
	\includegraphics[width=0.8\linewidth]{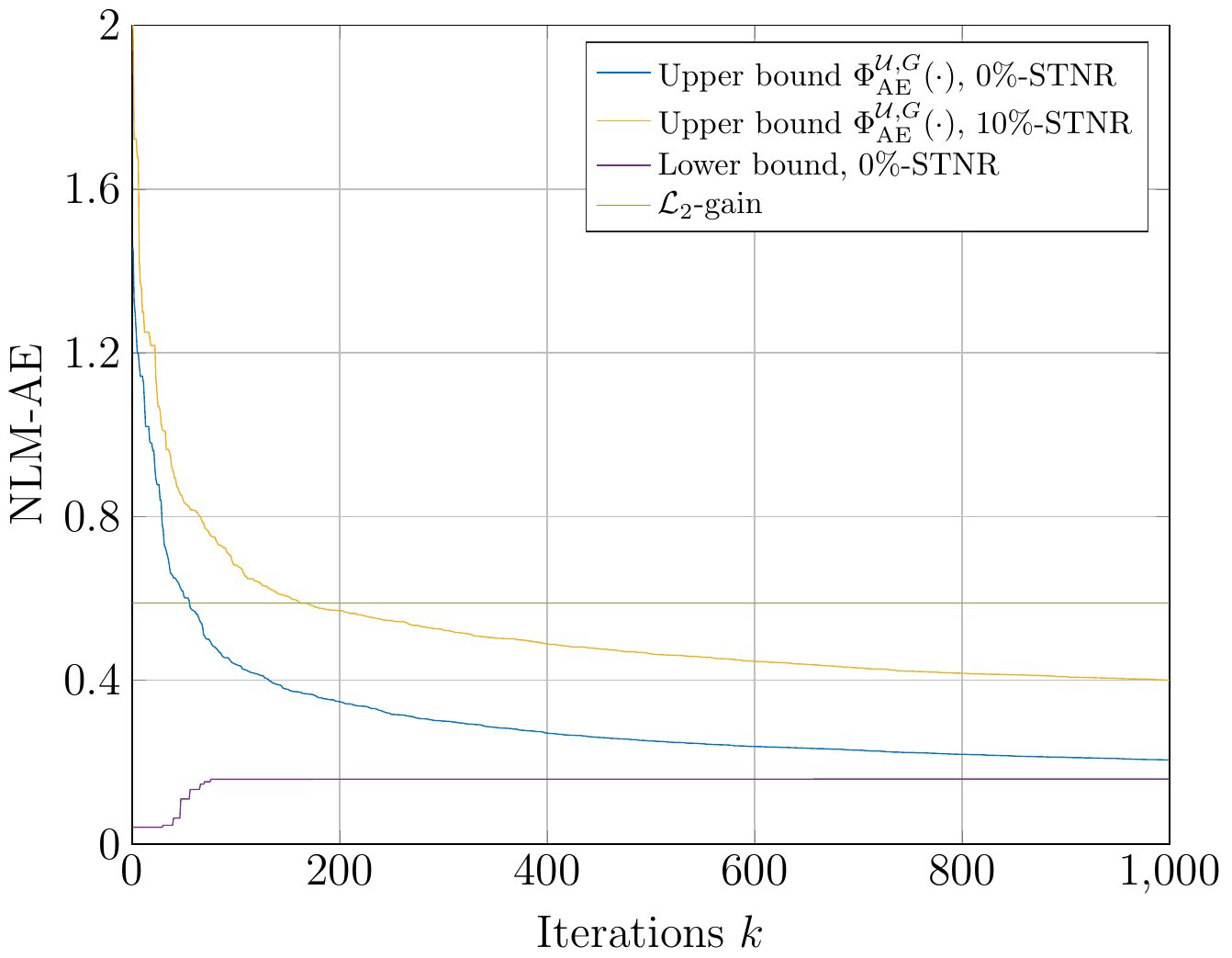}
	\caption{Simulated sequence of AE-NLM inference $\Phi_{\text{AE}}^{\mathcal{U},G}(\cdot)$ of Algorithm~\ref{IterScheme}.}
	\label{Fig.NLM_iteration}
\end{figure}
\DeleteSix{While Algorithm~\ref{IterScheme} exhibits for the relaxation from Theorem~\ref{Relaxation1} a slightly faster convergence rate, the computation of the local AE-NLM inferences for the relaxation from Theorem~\ref{Relaxation2} is slightly faster. }If the output is corrupted by additive noise with a signal-to-noise-ratio of $10\%$, then the adapted Algorithm~\ref{IterScheme} as described in Remark~\ref{RmkNoise2} still provides a guaranteed upper bound of the AE-NLM.\\\indent Furthermore, Figure~\ref{Fig.NLM_iteration} shows the lower bound from \cite{c100} on the AE-NLM derived from the collected data. The knowledge of the distance of the guaranteed lower and upper bound on the AE-NLM in each iteration constitutes a reasonable termination criterion for the iteration. Indeed, this distance measures the potential improvement of the AE-NLM estimation by further sampling.\\\indent Figure~\ref{Fig.NLM_input_space} demonstrates the division of the input set $\bar{\mathcal{U}}$ into hyperrectangles by Algorithm~\ref{IterScheme}.\DeleteSix{
\begin{figure}
	\centering
	\subfigure{\includegraphics[width=0.84\linewidth]{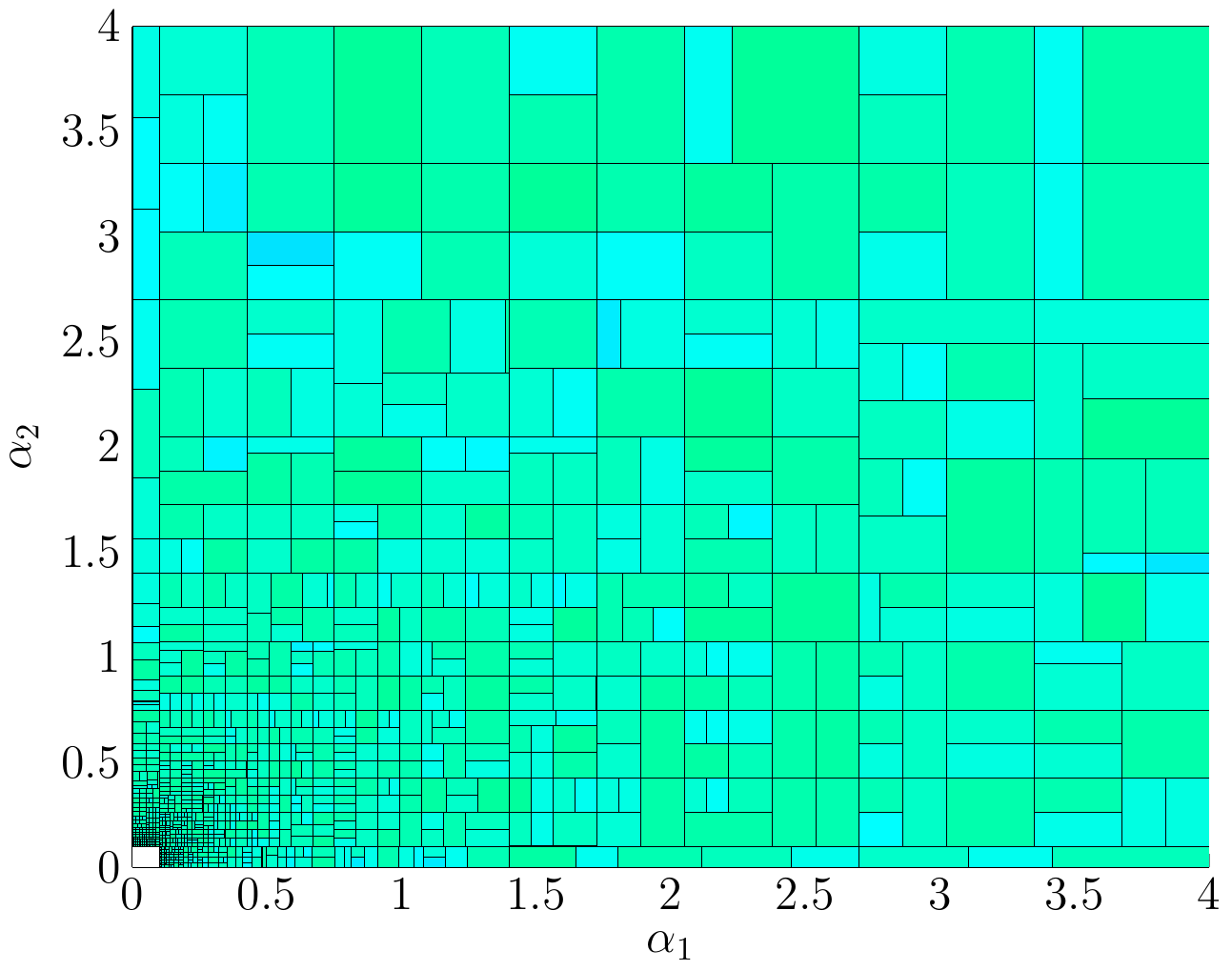}} \hspace{.01\textwidth}
	\subfigure{\includegraphics[width=0.07\linewidth]{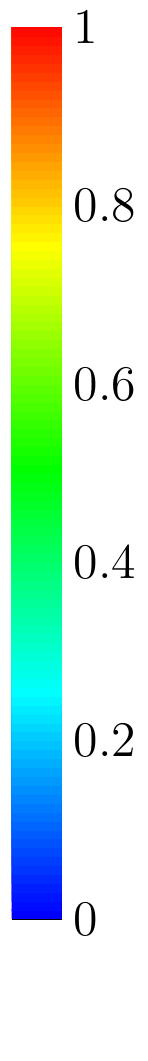}}
	\caption{Divided input set $\bar{\mathcal{U}}$ after $1000$ iterations\DeleteSix{ using the relaxation from Theorem~\ref{Relaxation1}}. The colour encodes the relation $\Phi_{\text{AE}}^{\mathcal{U}_{H_i},G}/\ell_2$ which is smaller than or equal to $1$ by definition. }
	\label{Fig.NLM_input_space}
\end{figure}}\Six{ 
\begin{figure}
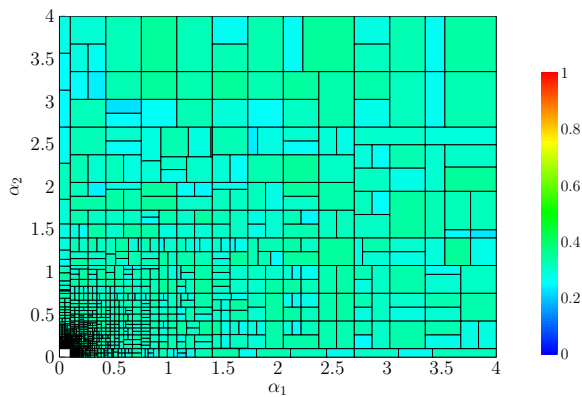

	\centering
	\subfigure{\includegraphics[width=0.75\linewidth]{NLM_input_space}} \hspace{.01\textwidth}
	\subfigure{\includegraphics[width=0.07\linewidth]{Colour_encoding}}
	\caption{Divided input set $\bar{\mathcal{U}}$ after $1000$ iterations\DeleteSix{ using the relaxation from Theorem~\ref{Relaxation1}}. The colour encodes the relation $\Phi_{\text{AE}}^{\mathcal{U}_{H_i},G}/\ell_2$ which is smaller than or equal to $1$ by definition. }
	\label{Fig.NLM_input_space}
\end{figure}} 
Due to the iterative approach, Algorithm~\ref{IterScheme} is more data-efficient than the offline approach \cite{c100} that requires $10000$ data samples to conclude on an upper bound of $0.48$ for the AE-NLM. Moreover, the computation time of Algorithm~\ref{IterScheme} for $1000$ iterations lasts around $5$ minutes which is comparable to \cite{c100}.

\section{Conclusions}

In this paper, we exploited a non-parametric data-based set-membership representation of the input-output mapping of an unknown nonlinear system to derive a conclusion on its strength of nonlinearity. First, we concluded on the nonlinearity measure from given input-output samples using local inference of the nonlinearity measure and a nonconvex \DeleteSix{two relaxations}\Six{relaxation}. Second, an iterative scheme was presented to decrease the guaranteed upper bound of the AE-NLM by further performed experiments. We ensured that the complexity of each iteration of the algorithm does not increase and proved the convergence to the true nonlinearity measure. \DeleteSix{Moreover, the algorithm was adapted, among others, by an alternating optimization of the linear approximation model. In a numerical example, the presented algorithm was more data efficient than the approach in \cite{c100}. In a future work, other dissipativity properties could be studied. Furthermore, the estimation could be improved if only causal mappings $N$ are considered, which is not necessarily required in our setup.}\Six{In a numerical example, the presented algorithm was more data efficient than the approach in \cite{c100}.\\\indent In a future work, other dissipativity properties could be studied, the iterative scheme could be extended by an alternating optimization \cite{c107} of the linear approximation model, and the estimation of the local inference could be improved by considering local Lipschitz constants.}


\begin{thebibliography}{99}

\bibitem{c12} Z. S. Hou and Z. Wang. From model-based control to data-driven control: Survey, classification and perspective. Information Sciences, 235, 3-35, 2013.

\bibitem{c6}  J. M. Montenbruck and F. Allg{\"o}wer. Some Problems Arising in Controller Design from Big Data via Input-Output Methods. In Proc. 55th IEEE Conf. on Decision and Control, pp. 6525-6530, 2016.


\bibitem{c100} T. Martin and F. Allg{\"o}wer. Nonlinearity measures for data-driven system analysis and control. In Proc. 58th IEEE Conf. on Decision and Control, pp. 3605-3610, 2019.


\bibitem{c3}  F. Allg{\"o}wer. Definition and Computation of a Nonlinearity Measure. In Proc. 3rd IFAC Nonlinear Control Syst. Design Symp., pp. 257-262, 1995.

\bibitem{c2}  T. Schweickhardt and F. Allg{\"o}wer. On System Gains, Nonlinearity Measures, and Linear Models for Nonlinear Systems. IEEE Trans. Automatic Control, 54(1):62-78, 2009.



\bibitem{c8} B. Wahlberg, M. B. Syberg and H. Hjalmarsson. Non-parametric methods for $\mathcal{L}_2$-gain estimation using iterative experiments. Automatica, 46(8):1376-1381, 2010.


\bibitem{c101} T. Martin, A. Koch and F. Allg{\"o}wer. Data-driven surrogate models for LTI systems via saddle-point dynamics. In Proc. 21st IFAC World Congress, 2020.

\bibitem{c112} C. E. Rasmussen and C. K. I. Williams. Gaussian Processes for Machine Learning. The MIT Press, 2006.

\overlength{
\bibitem{c109} A. Romer, S. Trimpe and F. Allg{\"o}wer. Data-driven  inference  of  passivity properties via Gaussian  process  optimization. In Proc. 18th European Control Conference (ECC), pp. 29-35. IEEE, 2019.

\bibitem{c111} N. Srinivas, A. Krause, S. Kakade, and M. Seeger. Gaussian Process Optimization in the Bandit Setting: No Regret and Experimental Design. In Proc. 27th International Conference on Machine Learning, pp. 1015–1022, 2010.
}

\bibitem{c108} M. Milanese and C. Novara. Set Membership identification of nonlinear systems. Automatica, 40(6):957–975, 2004.


\bibitem{c102} J. P. Calliess. Conservative decision-making and inference in uncertain dynamical systems. University of Oxford, 2014.


\bibitem{c113} A. Blaas, J. M. Manzano, D. Limon and J. Calliess. Localised Kinky Inference. In Proc. 18th European Control Conference (ECC), pp. 985–992, 2019.


\bibitem{BnB} A. H. Land and A. G. Doig. An Automatic Method of Solving Discrete Programming Problems. Econometrica, 28(3):497-520, 1960.



\bibitem{c5}  G. Zames. On the Input-Output Stability of Time-Varying Nonlinear Feedback Systems. Part I: Conditions Derived Using Concepts of Loop Gain, Conicity, and Positivity. IEEE Trans. Automat. Control, 11(2):228-238, 1966. 

\bibitem{c104} R. G. Strongin. On the convergence of an algorithm for finding a global extremum. Engineering in Cybernetics, 1973.


\bibitem{c105} J. P. Calliess. Lipschitz optimisation for Lipschitz interpolation. In Proc. American Control Conference (ACC), pp. 3141–3146, 2017.


\bibitem{c103} S. H. Nair, M. Bujarbaruah, and F. Borrelli. Modeling of Dynamical Systems via Successive Graph Approximations. In Proc. 21st IFAC World Congress, 2020. (arXiv preprint, arXiv:1910.03719).




\bibitem{c107} S. Boyd, N. Parikh, E. Chu, B. Peleato, and J. Eckstein. Distributed Optimization and Statistical Learning via the Alternating Direction Method of Multipliers. Foundations and Trends in Machine Learning, 3(1):1–122, 2010.





\end{thebibliography}
\end{document}